\newtheorem{theorem}{Theorem}
\newtheorem{lemma}{Lemma}
\newtheorem{remark}{Remark}
\newtheorem{corollary}{Corollary}
\newtheorem{definition}{Definition}
\title{\LARGE \bf
Fundamental limitations of sensitivity metrics for anomaly impact analysis in LTI systems
}
\author{Jingwei Dong, Kangkang Zhang, Anh Tung Nguyen, and André M. H. Teixeira% <-this % stops a space
\thanks{This work is supported by the Swedish Research Council under the grant 2021-06316 and by the Swedish Foundation for Strategic Research.}% <-this % stops a space
\thanks{Jingwei Dong, Anh Tung Nguyen, and André M. H. Teixeira are with the Department of Information Technology, Uppsala University, SE-75105 Uppsala, Sweden {\tt\small \{jingwei.dong; anh.tung.nguyen; andre.teixeira\}@it.uu.se}. Kangkang Zhang is with the College of Automation Engineering, Nanjing University of Aeronautics and Astronautics, Nanjing, 211106, Jiangsu, China
        {\tt\small (zhang.kangkang@nuaa.edu.cn)}.}%
}
\begin{document}

\maketitle
\thispagestyle{empty}
\pagestyle{empty}

%%%%%%%%%%%%%%%%%%%%%%%%%%%%%%%%%%%%%%%%%%%%%%%%%%%%%%%%%%%%%%%%%%%%%%%%%%%%%%%%
% \begin{abstract}
% This paper studies the fundamental limitations of anomaly detection metrics in linear time-invariant (LTI) systems.
% Building on the output-to-output gain (OOG) developed for stealthy injection attacks, we first propose a novel \mbox{input-to-input} gain (IIG) for robust fault detection, which is proved to be less conservative than conventional mixing metrics, including $\mathcal{H}_{\infty}/\mathcal{H}_{\_}$ and $\mathcal{H}_{\infty}/\mathcal{H}_{\infty}$.
% Subsequently, a {\color{red}unified analysis framework} for OOG and IIG is established given their structural similarities.
% Utilizing the Poisson integral relation, we perform a systematic analysis of the fundamental limitations for both OOG and IIG,
% explicitly characterizing their performance bounds imposed by \mbox{non-minimum} phase (NMP) zeros. 
% Finally, a numerical example is employed to validate the results.
% \end{abstract}

\begin{abstract}
This study establishes a connection between the output-to-output gain (OOG), a sensitivity metric quantifying the impact of stealthy attacks, and a novel input-to-input gain (IIG) introduced to evaluate fault sensitivity under disturbances, and investigates their fundamental performance limitations arising from the transmission zeros of the underlying dynamical system.
Inspired by the OOG, which characterizes the maximum performance loss caused by stealthy attacks, the IIG is proposed as a new measure of robust fault sensitivity, and is defined as the maximum energy of undetectable faults for a given disturbance intensity.
Then, using right (for OOG) and left (for IIG) co-prime factorizations, both metrics are expressed as the~$\mathcal{H}_{\infty}$ norm of a ratio of the numerator factors. 
This unified representation facilitates a systematic analysis of their fundamental limitations.
Subsequently, by utilizing the Poisson integral relation, theoretical bounds for the IIG and OOG are derived,
explicitly characterizing their fundamental limitations imposed by system \mbox{non-minimum} phase (NMP) zeros. 
Finally, a numerical example is employed to validate the results.
\end{abstract}

%%%%%%%%%%%%%%%%%%%%%%%%%%%%%%%%%%%%%%%%%%%%%%%%%%%%%%%%%%%%%%%%%%%%%%%%%%%%%%%%
\section{INTRODUCTION}
The selection of sensitivity metrics is critical in model-based anomaly detection, including both fault detection~\cite{gao2015survey} and the emerging cyber-attack detection driven by the development of cyber-physical systems~\cite{pasqualetti2013attack}.
Conventional approaches in robust fault detection often rely on mixing metrics such as~$\mathcal{H}_{\infty}/\mathcal{H}_{\infty}$ and $\mathcal{H}_{\infty}/\mathcal{H}_{\_}$ to trade off disturbance rejection against fault sensitivity~\cite{ding2008model,dong2025robust}.
However, these metrics are conservative as they consider disturbance robustness and best-case (or worst-case) fault sensitivity separately. 
Additionally, while these metrics can be adapted to deal with security problems, they are ill-posed for analyzing deliberately crafted attacks, e.g., stealthy attacks \cite{teixeira2015secure,Zhang2022}, which evade conventional detection methods.

To address the limitations of conventional metrics, the authors in~\cite{AT2015Strategic} introduced an OOG metric to analyze the impact of stealthy attacks.
%In particular, it quantifies a ratio of the impacts of one attack on two outputs: performance outputs and detection outputs.
In particular, it quantifies the attack impact by calculating the maximum loss in the performance output caused by stealthy attacks, with the stealthiness described by the residual output.
Unlike mixing metrics mentioned above, attack destruction and detectability are jointly integrated in the OOG.
Building on this concept and its advantages, we extend OOG to robust fault detection and develop a novel IIG metric.
The IIG provides an input-based counterpart that concurrently characterizes fault sensitivity and disturbance robustness.
%The IIG quantifies a ratio of the impacts of two inputs: faults and disturbances, on detection outputs, providing an input-based counterpart that concurrently characterizes fault sensitivity and disturbance robustness.

Notably, all aforementioned sensitivity metrics involve conflicting objectives, e.g., maximizing fault sensitivity while minimizing disturbance effects through~$\mathcal{H}_{\infty}/\mathcal{H}_{\_}$. 
The conflicting nature and some intrinsic system properties (such as NMP zeros, unstable poles) impose fundamental limitations.
To the best of our knowledge, similar limitations for these metrics remain unexplored in the literature.
Fortunately, insights from fundamental limitations in \mbox{closed-loop} feedback control systems offer valuable reference.
The study traces back to Bode~\cite{bode1945network}, who introduced the renowned \mbox{gain-phase} relation and sensitivity integral formula for \mbox{single-input single-output} systems. 
Building on this foundation, Freudenberg and Looze~\cite{freudenberg1985right} extended the analysis through the Poisson integral relation, demonstrating how NMP zeros impose fundamental limitations on the achievable performance of sensitivity and complementary sensitivity functions. 
A review of fundamental limitations in feedback control systems is available in~\cite{chen2019fundamental}.

Based on the above discussion, this paper investigates the performance limitations of both OOG and IIG on linear time-invariant (LTI) systems.
The contributions are summarized as follows.
 (1) \textit{Novel sensitivity metric}: 
        %Inspired by OOG between two outputs, we propose IIG between two nputs: faults and disturbances, for robust fault detection.
        Inspired by OOG, we propose IIG for robust fault detection from the perspective of input signals, which allows for joint consideration of fault sensitivity and disturbance robustness.
 (2) \textit{Unified analysis method}: For systems with one-dimensional input and output signals, we propose a unified analysis for OOG and IIG given their analogy (Theorem~\ref{thm: unified expression}). 
 (3) \textit{Fundamental limitations}: 
    Utilizing the Poisson integral relation, we establish theoretical bounds for both OOG and IIG (Theorem~\ref{thm: performance}), revealing their fundamental limitations relating to NMP zeros.

% The outline of the paper is organized as follows. Section~\ref{sec: problem statement} states the problem to be studied in this paper. 
% Section~\ref{sec: main results} presents the main results. 
% In Section~\ref{sec: numerical example}, the developed results are validated through a numerical example. 
% Finally, Section~\ref{sec: conclusions} concludes the paper with future directions. 

\noindent
\textbf{Notation:} 
Sets~$\mathbb{R}$, $\mathbb{R}_+$, and~$\mathbb{R}^n$ denote real numbers, positive real numbers, and the space of~$n$-dimensional \mbox{real-valued} vectors, respectively. 
Given a signal $u = \{u(t)\}_{t \in \mathbb{R}_+}$ and a transfer function~$\mathds{T}$, $\mathds{T}[u]$ denotes the output of the system in response to~$u$ with zero initial conditions.
Considering the time-horizon $[0,T]$ where $T\in\mathbb{R}_+$, the $\mathcal{L}_2$ norm of $u$ over~$[0,T]$ is denoted as~$\|u\|^2_{[0,T]}$.
Define the space of square integrable signals as~$\mathcal{L}_2 \triangleq \{u:\|u\|^2_{[0,\infty]}<\infty\}$ and the extended space~$\mathcal{L}_{2e} \triangleq \{u:\|u\|^2_{[0,T]}<\infty, ~\forall T\in\mathbb{R}_+\}$.
For a complex number $s$, $\bar{s}$ and Re$(s)$ denote its complex conjugate and real part, respectively.
The set of NMP zeros of~$\mathds{T}$ is denoted as~$\mathcal{Z}_{\mathds{T}}\triangleq \{s: \mathds{T}(s)=0, ~\textup{Re}(s) >0 \}$.

%%%%%%%%%%%%%%%%%%%%%%%%%%%%%%%%%%%%%%%%%%%%%%%%%%%%%%%%%%%%%%%%%%%%%%%%%%%%%%%%
\section{Problem formulation}\label{sec: problem statement}
In this section, we first introduce the concepts of OOG and IIG. Then, we formalize the problem of fundamental limitation analysis studied in this paper. 

\subsection{Output-to-output gain}
Consider the following continuous-time LTI system 
\begin{align}\label{eq: LTI-SS}
\Sigma:\left\{ \begin{array}{l}
     \dot{x}_1(t) = Ax_1(t) + B a(t)  \\
     y_r(t) = C_r x_1(t) + D_r a(t)\\
     y_p(t) = C_p x_1(t) + D_p a(t),
\end{array}
\right.
\end{align}
where $x_1(t) \in \mathbb{R}^{n_{x_1}}$,  $y_r(t) \in \mathbb{R}^{n_{y_r}}$, $y_p(t) \in \mathbb{R}^{n_{y_p}}$, and $a(t) \in \mathbb{R}^{n_a}$ denote the state, residual output, performance output, and attack signal, respectively. 
%The signal $a(t) \in \mathbb{R}^{n_a}$ is the attack signal. 
Matrices $A,B,C_r,D_r,C_p$, and $D_p$ are known with appropriate dimensions. 
The system~\eqref{eq: LTI-SS} is denoted as $\Sigma \triangleq (A,B, [C_r^\top,C_p^\top]^\top, [D_r^\top, D_p^\top]^\top)$
and the pair $(A,B)$ is assumed to be controllable.
The transfer functions from $a$ to $y_r$ and from $a$ to $y_p$ are denoted as~$\mathds{T}_{a y_r}$ and~$\mathds{T}_{a y_p}$, respectively. 

The system~\eqref{eq: LTI-SS} represents an 
augmented version of the \mbox{closed-loop} model illustrated in Fig.~\ref{fig: OOG}, which integrates the physical plant, feedback controller, and anomaly detector~\cite[Chapter 6]{ferrari2021safety}.
The performance output $y_p$ is used to evaluate the closed-loop performance loss caused by attack signals.
The residual $y_r$ can be viewed as the output of the anomaly detector, which is used to detect anomalies in the system. 
The alarm is triggered when $y_r$ deviates significantly from zero, i.e., $\|y_r\|^2_{[0,T]} \geq \epsilon$ where~$\epsilon \in \mathbb{R}_+$ is a given threshold, indicating the presence of anomalies. Without loss of generality, we set~$\epsilon=1$ throughout this paper.

Consider system \eqref{eq: LTI-SS}, a stealthy adversary aims to inject a crafted attack signal~$a$ to disrupt the system's behavior while remaining stealthy to the anomaly detector.
In the context of the finite time interval, the level of attack disruption can be evaluated through the energy increase of the performance output,~i.e.,~$\|y_p\|^2_{[0,T]}$.
To remain stealthy, $\|y_r\|^2_{[0,T]}$ should be below the threshold, i.e., $\|y_r\|^2_{[0,T]} \leq 1$. 
Given the description of the system and the adversary model, and letting $T$ go to infinity, the stealthy attack~$a$ can be characterized through the following non-convex optimization problem~\cite{AT2015Strategic}:
\begin{align}\label{eq: OOG}
   \|\Sigma\|^2_{\mathcal{L}_{2e},y_p \leftarrow y_r} \triangleq& \sup_{a \in \mathcal{L}_{2e}} ~\|y_p\|^2_{\mathcal{L}_2} \notag\\
   &~~~
   \text{s.t.} ~\eqref{eq: LTI-SS},~x_1(0)=0, ~\|y_r\|^2_{\mathcal{L}_2} \leq 1.
\end{align} 
Here, $\|\Sigma\|^2_{\mathcal{L}_{2e},y_p \leftarrow y_r}$ is defined as the OOG of $\Sigma$, which characterizes the maximum disruption on the performance output induced by the stealthy attack~$a$. 
\begin{figure}
    \centering
    \includegraphics[width=0.8\linewidth]{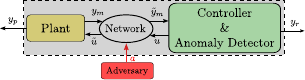}
    \caption{\small Structure of networked control systems under attacks.}
    \label{fig: OOG}
    \vspace{-0.6cm}
\end{figure}

\begin{figure}
    \centering
    \includegraphics[width=0.6\linewidth]{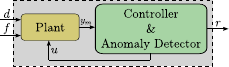}
    \caption{\small Structure of systems under faults and disturbances.}
    \label{fig: IIG}
    \vspace{-0.6cm}
\end{figure}

% \begin{figure}
%     \centering
%     \subfigure[]{\includegraphics[width=0.35\textwidth]{Fig_OOG.pdf}} 
%     \subfigure[]{\includegraphics[width=0.25\textwidth]{Fig_IIG.pdf}} 
%     \caption{(a) blah (b) blah (c) blah (d) blah}
%     \label{fig:foobar}
% \end{figure}

\subsection{Input-to-input gain}
The aforementioned OOG focuses on evaluating the disruption level and detectability of stealthy attacks by jointly examining the effects of an attack $a$ on two outputs $y_p$ and $y_r$.
This concept is dual to robust fault detection~\cite{ding2008model}, which analyzes the effects of two inputs (faults and disturbances) on the same output.
However, a key distinction is that the effects of the two inputs are analyzed separately in trivial robust fault detection (e.g., by setting the other to zero).
Building on the joint analysis idea of OOG, we develop IIG to evaluate the effects of faults and disturbances concurrently.
In what follows, we provide a detailed explanation of this extension.

Let us consider the following continuous-time LTI system
\begin{align}\label{eq: LTI-SS Fault}
\tilde{\Sigma}:\left\{ \begin{array}{l}
     \dot{x}_2(t) = \tilde{A}x_2(t)  + B_d d(t) + B_f f(t)  \\
     r(t)   = C x_2(t) + D_d d(t) +  D_f f(t),
\end{array}
\right. 
\end{align}
where $x_2(t) \in \mathbb{R}^{n_{x_2}}$, $d(t) \in \mathbb{R}^{n_d}$, $f(t) \in \mathbb{R}^{n_f}$, and~$r(t) \in \mathbb{R}^{n_r}$ denote the state, disturbance, fault, and residual, respectively.
Matrices $\tilde{A},~B_d$, $B_f$, $C$, $D_d$, and $D_f$ are known with appropriate dimensions.
The system~\eqref{eq: LTI-SS Fault} is denoted as $\tilde{\Sigma} \triangleq (\tilde{A},[B_d,B_f],C,[D_d,D_f])$, which can be interpreted as a general form of a closed-loop system incorporating a feedback controller and anomaly detector~\cite{ding2008model} (see Fig.~\ref{fig: IIG}). 
We assume that the pair $(\tilde{A},C)$ is observable, and the disturbance $d$ is bounded, i.e.,~$\|d\|^2_{\mathcal{L}_2} \leq 1$, with the upper bound set to~$1$ without loss of generality.

The residual~$r$ generated by the anomaly detector is used to indicate the occurrence of faults, which can be written as
\begin{align}\label{eq: r}
r = \mathds{T}_{dr}[d] + \mathds{T}_{fr}[f],
\end{align}
where $\mathds{T}_{dr}$ and $\mathds{T}_{fr}$ are the transfer functions from $d$ to $r$ and from $f$ to $r$, respectively.
It is worth emphasizing that we consider non-decoupled disturbances in~\eqref{eq: LTI-SS Fault}. Otherwise, one can use disturbance decoupling techniques as in~\cite{ding2008model}, rendering ~$\mathds{T}_{dr}[d]=0$. In this case, any faults can be detected theoretically. 
In the case of non-decoupled disturbances, the effect of~$f$ on $r$ can be masked by that of~$d$, resulting in undetectable faults, as defined below.

\begin{definition}[Undetectable faults]
    Given a set of disturbances~$\mathcal{D}$, a fault~$f$ is~$\mathcal{D}$-undetectable if there exists a $d \in \mathcal{D}$ such that
\begin{align}\label{eq: undet con}
    \| r \|^2_{\mathcal{L}_2} = \|\mathds{T}_{dr}[d] + \mathds{T}_{fr}[f] \|^2_{\mathcal{L}_2} \leq \|\mathds{T}_{dr}[d]\|^2_{\mathcal{L}_2}.
\end{align}
\end{definition}
The condition \eqref{eq: undet con} defines the undetectability by jointly considering the effects of faults and disturbances. Additionally, $\mathcal{D} = \{0\}$ corresponds to the generic definition of undetectable faults found in~\cite{ding2008model}, i.e.,~$\mathds{T}_{fr}[f] = 0$.

Note that a \textit{larger} undetectable fault in energy for a given disturbance intensity indicates \textit{lower} fault sensitivity. 
Thus, the fault sensitivity of~\eqref{eq: LTI-SS Fault} can be quantified by the maximum energy of undetectable faults satisfying~\eqref{eq: undet con}.
This leads to the following optimization problem:
\begin{align}\label{eq: IIG}
    \|\tilde{\Sigma}\|^2_{\mathcal{L}_{2e},f \leftarrow d} \triangleq&\sup_{f,\,d\in \mathcal{L}_{2e}} ~\|f\|^2_{\mathcal{L}_2} \notag \\
    \text{s.t.} ~&\eqref{eq: LTI-SS Fault}, ~x_2(0)=0, ~ \|d\|^2_{\mathcal{L}_2} \leq  1, \\
    & \|\mathds{T}_{dr}[d] + \mathds{T}_{fr}[f] \|^2_{\mathcal{L}_2} \leq \|\mathds{T}_{dr}[d]\|^2_{\mathcal{L}_2}. \notag                 
\end{align}
The optimal value $\|\tilde{\Sigma}\|^2_{\mathcal{L}_{2e},f \leftarrow d}$ serves as the novel metric for the fault sensitivity under a certain disturbance intensity and is defined as the IIG of $\tilde{\Sigma}$.

\subsection{Problem statement}
The OOG regarding the stealthiness of an attack and the IIG characterizing the detectability of a fault are both formulated as the maximization problem in~\eqref{eq: OOG} and~\eqref{eq: IIG}, respectively. In fact, a similarity connection between OOG and IIG is hidden behind, which can be observed from the forms of~\eqref{eq: OOG} and~\eqref{eq: IIG}, and needs to be clarified. In addition, the maximums in~\eqref{eq: OOG} and~\eqref{eq: IIG} actually have their own limits that cannot be broken due to inherent system properties, such as NMP zeros and unstable poles.
%It can be observed from the two optimization problems~\eqref{eq: OOG} and~\eqref{eq: IIG} that, both of them seek to balance two conflicting design objectives to achieve optimal performance. However, due to inherent system properties, such as NMP zeros and unstable poles, optimal performance has its limits and may not even exist in some cases.
%Therefore, this paper investigates the fundamental limitations of OOG and IIG.
%{\color{red}To simplify the subsequent analysis, this study considers one-dimensional input and output signals in~\eqref{eq: LTI-SS} and~\eqref{eq: LTI-SS Fault}, primarily focusing on the following two aspects:}
Taking the above two issues into consideration, this paper focuses on the following problems:
\begin{enumerate}
    \item  Reveal the hidden connection between OOG and IIG by developing a unified expression;%Develop a unified analysis framework for OOG and IIG given their structural similarities;
    \item Based on this unified expression, the fundamental limitations of both OOG and IIG are derived using the Poisson integral relation. %Under the unified framework, identify the fundamental limitations of both OOG and IIG. 
\end{enumerate}
To simplify the subsequent analysis, this study considers one-dimensional input and output signals in~\eqref{eq: LTI-SS} and~\eqref{eq: LTI-SS Fault}, and provides some new insights into the fault diagnosis and security analysis. The OOG and IIG for more general multi-input-multi-output systems will be studied in future work. %By studying systems with one-dimensional input and output signals, we provide some new insights into the fault diagnosis and security analysis of more general systems, which will be studied in future work.

%%%%%%%%%%%%%%%%%%%%%%%%%%%%%%%%%%%%%%%%%%%%%%%%%%%%%%%%%%%%%%%%%%%%%%%%%%%%%%%%
\section{Main results}\label{sec: main results}

\subsection{Unified analysis method for OOG and IIG}
To the end, some preparatory results are introduced.
The condition~\eqref{eq: undet con} is first reformulated in the following lemma.

\begin{lemma}[Reformulated undetectability condition]\label{lem: reform con}
    Consider the LTI system~\eqref{eq: LTI-SS Fault} with the disturbance~$\|d\|^2_{\mathcal{L}_2} \leq  1$. 
    The undetectability condition~\eqref{eq: undet con} can be guaranteed if the following conditions hold: 
    \begin{align}\label{eq: New_Undet_Con}
         \mathds{T}_{fr}[f] = \mathds{T}_{dr}[\tilde{d}],  ~\|\tilde{d}\|^2_{\mathcal{L}_2} \leq \xi^2,~\xi \in [0,2],
    \end{align} 
where~$\tilde{d} = -\xi d$.
\end{lemma}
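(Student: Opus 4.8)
The plan is to exploit the linearity of the disturbance channel $\mathds{T}_{dr}$ to collapse the functional undetectability inequality~\eqref{eq: undet con} into a single scalar inequality in $\xi$. First I would substitute the parametrization $\tilde{d} = -\xi d$ into the factorization condition $\mathds{T}_{fr}[f] = \mathds{T}_{dr}[\tilde{d}]$. Because the system operates under zero initial conditions, $\mathds{T}_{dr}$ acts as a linear operator, so $\mathds{T}_{dr}[\tilde{d}] = \mathds{T}_{dr}[-\xi d] = -\xi\,\mathds{T}_{dr}[d]$, and hence the fault response satisfies $\mathds{T}_{fr}[f] = -\xi\,\mathds{T}_{dr}[d]$.

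Next I would insert this identity into the residual expression~\eqref{eq: r}. The disturbance and fault contributions combine into
\[
r = \mathds{T}_{dr}[d] + \mathds{T}_{fr}[f] = (1-\xi)\,\mathds{T}_{dr}[d],
\]
so that $\|r\|^2_{\mathcal{L}_2} = (1-\xi)^2\,\|\mathds{T}_{dr}[d]\|^2_{\mathcal{L}_2}$. The undetectability requirement~\eqref{eq: undet con} then reduces to $(1-\xi)^2\,\|\mathds{T}_{dr}[d]\|^2_{\mathcal{L}_2} \leq \|\mathds{T}_{dr}[d]\|^2_{\mathcal{L}_2}$, i.e.\ to the scalar condition $(1-\xi)^2 \leq 1$. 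Solving this gives $|1-\xi| \leq 1$, that is $0 \leq \xi \leq 2$, which is exactly the admissible range $\xi \in [0,2]$ stated in~\eqref{eq: New_Undet_Con}. The accompanying bound $\|\tilde{d}\|^2_{\mathcal{L}_2} \leq \xi^2$ is then consistent by construction, since $\|\tilde{d}\|^2_{\mathcal{L}_2} = \xi^2\,\|d\|^2_{\mathcal{L}_2} \leq \xi^2$ under the standing assumption $\|d\|^2_{\mathcal{L}_2} \leq 1$.

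There is no genuine analytical obstacle here; essentially all of the work is in recognizing the right parametrization $\tilde{d} = -\xi d$, after which the proof is a one-line substitution plus an elementary quadratic inequality. The only point that I expect to deserve care is being explicit that this is a \emph{sufficient} reformulation rather than an equivalent one. The choice $\mathds{T}_{fr}[f] = \mathds{T}_{dr}[\tilde{d}]$ with $\tilde{d}$ collinear with $d$ forces the fault's residual signature to lie along the disturbance-only residual, which is precisely what makes the clean $(1-\xi)$ cancellation possible. I would therefore flag that faults whose residual signature is not of this collinear form may still satisfy~\eqref{eq: undet con}, so the lemma isolates a structured subclass of undetectable faults; this subclass is nonetheless the relevant one, since it is what subsequently enables the co-prime-factorization representation and the Poisson-integral analysis of the IIG.
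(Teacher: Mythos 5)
Your proof is correct and follows essentially the same route as the paper's: substitute $\tilde{d}=-\xi d$, use linearity to get $r=(1-\xi)\,\mathds{T}_{dr}[d]$, and reduce~\eqref{eq: undet con} to the scalar inequality $|1-\xi|^2\leq 1$, with the bound $\|\tilde{d}\|^2_{\mathcal{L}_2}\leq\xi^2$ following from $\|d\|^2_{\mathcal{L}_2}\leq 1$. Your closing remark that the condition is only sufficient (isolating a collinear subclass of undetectable faults) is accurate and consistent with the paper's subsequent treatment of~\eqref{eq: Reform_IIG} as a lower bound.
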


\begin{proof}
    First,~\eqref{eq: undet con} is satisfied when~$-\xi\mathds{T}_{dr}[d] = \mathds{T}_{fr}[f]$ with~$\xi \in [0,2]$ because
    \begin{align*}
        \|\mathds{T}_{dr}[d] + \mathds{T}_{fr}[f] \|^2_{\mathcal{L}_2}  \leq |1-\xi|^2 \|\mathds{T}_{dr}[d]\|^2_{\mathcal{L}_2} \leq \|\mathds{T}_{dr}[d]\|^2_{\mathcal{L}_2}.
    \end{align*}
    Replacing~$-\xi d$ with $\tilde{d}$ in $-\xi\mathds{T}_{dr}[d] = \mathds{T}_{fr}[f]$ yields the first condition in~\eqref{eq: New_Undet_Con}. 
    The inequality~$\|\tilde{d}\|^2_{\mathcal{L}_2} \leq \xi^2$ follows directly from $\|d\|^2_{\mathcal{L}_2} \leq  1$ and $\tilde{d} = -\xi d$.  
    This completes the proof.
\end{proof}

With the stricter condition~\eqref{eq: New_Undet_Con} and setting~$\xi=2$, 
the optimization problem~\eqref{eq: IIG} becomes
\begin{align}\label{eq: Reform_IIG}
    \|\tilde{\underline{\Sigma}}\|^2_{\mathcal{L}_{2e},f \leftarrow \tilde{d}} \overset{\Delta}{=}\sup_{f,\tilde{d}\in \mathcal{L}_{2e}} ~&\|f\|^2_{\mathcal{L}_2} \notag \\
    \text{s.t.} ~~~&\eqref{eq: LTI-SS Fault}, ~x_2(0)=0, \notag\\
    & \mathds{T}_{fr}[f] = \mathds{T}_{dr}[\tilde{d}],
    ~\|\tilde{d}\|^2_{\mathcal{L}_2} \leq 4,
\end{align}
where~$\|\tilde{\underline{\Sigma}}\|^2_{\mathcal{L}_{2e},f \leftarrow \tilde{d}}$ serves as a lower bound for~$\|\tilde{\Sigma}\|^2_{\mathcal{L}_{2e},f \leftarrow d}$ since~\eqref{eq: New_Undet_Con} is a particular instance of~\eqref{eq: undet con}.  
Also, it is straightforward to show by contradiction that the optimal solution to~\eqref{eq: Reform_IIG} is achieved at~$\xi=2$.
While the condition~\eqref{eq: New_Undet_Con} may be restrictive, requiring $d$ and $f$ to have the same output (up to a scale), we note that the condition naturally arises in the context of indistinguishability between faults and disturbances~\cite[Chapter 4]{ding2008model}.
A similar condition was exploited in~\cite{sandberg2016control} to develop security indices with respect to faults masked by disturbances as well.

Next, we recall a fundamental result from dissipative systems theory.
For more details, we refer readers to~\cite[Theorem 4.5]{trentelman1991dissipation} and~\cite{teixeira2019optimal}.

\begin{lemma}[Frequency domain inequality~{\cite{teixeira2019optimal}}]\label{lem: FDI}
    Consider the LTI system in~\eqref{eq: LTI-SS} with~$(A,B)$ controllable and the inequality:
    \begin{align}\label{eq: OOG_ineq}
         \| y_r \|^2_{\mathcal{L}_2} - \| y_p \|^2_{\mathcal{L}_2} \geq 0.
    \end{align}
    (1) For all trajectories of system~\eqref{eq: LTI-SS} with $x_1(0)=0$, a \textit{necessary} condition for~\eqref{eq: OOG_ineq} to hold is 
    \begin{align*}
            \mathds{T}^{\top}_{a y_r}(\bar{s}) \mathds{T}_{a y_r}(s) \geq \mathds{T}^{\top}_{a y_p} (\bar{s})\mathds{T}_{a y_p}(s), \forall s \notin \lambda_A, \textup{Re}(s) \geq 0.
    \end{align*}
    (2) For system~\eqref{eq: LTI-SS} subject to $T-$periodic trajectories, i.e., $x_1(t+T)=x_1(t)$ where $T \in \mathbb{R}_+$, a \textit{necessary and sufficient} condition for~\eqref{eq: OOG_ineq} to hold is
        \begin{align*}
            \mathds{T}^{\top}_{a y_r}(\bar{s}) \mathds{T}_{a y_r}(s) \geq \mathds{T}^{\top}_{a y_p} (\bar{s})\mathds{T}_{a y_p}(s), \forall s \notin \lambda_A, \textup{Re}(s) = 0.
        \end{align*}
        Here, $\lambda_A$ stands for a set of all the eigenvalues of matrix $A$.
\end{lemma}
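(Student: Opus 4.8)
The plan is to read $\|y_r\|^2_{\mathcal{L}_2}-\|y_p\|^2_{\mathcal{L}_2}$ as the integral of a quadratic supply rate and to convert the time-domain inequality~\eqref{eq: OOG_ineq} into a sign condition on the associated Popov function. Since $y_r=C_rx_1+D_ra$ and $y_p=C_px_1+D_pa$, the instantaneous supply rate $|y_r(t)|^2-|y_p(t)|^2$ is a quadratic form in $(x_1,a)$ with a symmetric matrix $M$ built from $(C_r,D_r,C_p,D_p)$. Evaluating this form along trajectories and substituting the frequency response $a\mapsto y$ identifies, on the imaginary axis, the para-Hermitian Popov function $\Phi(s)\triangleq \mathds{T}^{\top}_{ay_r}(\bar s)\mathds{T}_{ay_r}(s)-\mathds{T}^{\top}_{ay_p}(\bar s)\mathds{T}_{ay_p}(s)$, whose sign is exactly what both parts of the lemma assert.

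I would treat Part~(2) first, because periodicity makes Parseval exact. For a $T$-periodic trajectory, expand $a$ in a Fourier series with harmonics $\omega_k=2\pi k/T$; a periodic trajectory is the superposition of the corresponding steady-state responses, so the $k$-th harmonic of $y_r$ is $\mathds{T}_{ay_r}(j\omega_k)a_k$, and likewise for $y_p$. Parseval over one period then gives $\frac1T\int_0^T(|y_r|^2-|y_p|^2)\,dt=\sum_k a_k^{*}\Phi(j\omega_k)a_k$. Sufficiency is immediate: if $\Phi(j\omega)\succeq 0$ on the whole imaginary axis, every summand is nonnegative and~\eqref{eq: OOG_ineq} holds (interpreted over one period). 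For necessity I would excite a single harmonic $\omega_k$, which is admissible because controllability of $(A,B)$ lets any sinusoid be realized as a periodic trajectory, isolating $a_k^{*}\Phi(j\omega_k)a_k\ge 0$; letting $T$ vary over $\mathbb{R}_+$ makes $\{\omega_k\}$ dense, and continuity of $\Phi$ away from $\lambda_A$ upgrades this to $\Phi(j\omega)\succeq 0$ for all $\omega$. This settles the $\textup{Re}(s)=0$ equivalence.

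For Part~(1) I would first recover the imaginary-axis inequality and then push it into the open right half plane. With $x_1(0)=0$ and $a\in\mathcal{L}_2$, the relevant Laplace transforms lie in the Hardy space of the right half plane and $\hat y_r(j\omega)=\mathds{T}_{ay_r}(j\omega)\hat a(j\omega)$, so Plancherel gives $\|y_r\|^2_{\mathcal{L}_2}-\|y_p\|^2_{\mathcal{L}_2}=\frac1{2\pi}\int_{-\infty}^{\infty}\hat a(j\omega)^{*}\Phi(j\omega)\hat a(j\omega)\,d\omega$; localizing $\hat a$ in a narrow band about any $\omega_0$ forces $\Phi(j\omega_0)\succeq 0$, i.e.\ the inequality on $\textup{Re}(s)=0$. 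The remaining step, extending the sign of $\Phi$ into $\textup{Re}(s)>0$, is where the common-denominator structure matters: both $\mathds{T}_{ay_r}$ and $\mathds{T}_{ay_p}$ are rational with denominator $\det(sI-A)$, so $\Phi$ continues analytically off the axis, and its right-half-plane sign is tied to the causal, from-rest structure of the admissible trajectories through the storage-function / algebraic-Riccati-equation and spectral-factorization machinery of~\cite[Theorem~4.5]{trentelman1991dissipation} (see also~\cite{teixeira2019optimal}).

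I expect this last extension to be the main obstacle. The two imaginary-axis statements follow cleanly from Parseval and Fourier analysis once controllability supplies a rich enough family of test trajectories, but passing from $\textup{Re}(s)=0$ to the closed right half plane cannot come from Parseval alone --- it is exactly the step that sees the non-minimum-phase zeros, and it requires the dissipation-theoretic characterization (a nonnegative storage function and the factorization it induces) rather than a direct signal construction. I would therefore isolate the right-half-plane sign as a separate claim, prove the $\textup{Re}(s)=0$ equivalences in full, and invoke the cited results for the extension, throughout keeping track that controllability and the exclusion $s\notin\lambda_A$ are what make the frequency-domain forms well defined.
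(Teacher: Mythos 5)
The paper does not prove this lemma: it is imported verbatim from the cited references (Teixeira 2019 and Trentelman--Willems, Theorem 4.5), so there is no in-paper argument to compare against. Judged on its own merits, your sketch correctly reconstructs the standard dissipativity/Parseval argument for everything that happens on the imaginary axis: the Fourier-series computation for Part (2), the single-harmonic necessity argument (with $T$ ranging over $\mathbb{R}_+$ to get density of the harmonics), and the band-localization argument giving $\Phi(j\omega)\succeq 0$ in Part (1) are all sound, modulo the minor caveat that for unstable $A$ an $\mathcal{L}_2$ input need not give $\mathcal{L}_2$ outputs, so the Plancherel step should really be run on finite horizons in $\mathcal{L}_{2e}$.

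The genuine gap is the half of Part (1) that distinguishes it from Part (2), namely the necessity of $\Phi(s)\succeq 0$ for $\textup{Re}(s)>0$, which you both defer entirely to the references and mischaracterize. Positivity of $\Phi$ on the axis does \emph{not} ``continue analytically'' into the open right half plane --- in the scalar case $\Phi(s)=|\mathds{T}_{ay_r}(s)|^2-|\mathds{T}_{ay_p}(s)|^2$ is not the modulus of an analytic function, and its sign can and does flip in the right half plane exactly when $\mathds{T}_{ay_r}$ has unshared NMP zeros; that failure mode is the entire subject of the paper, so the $\textup{Re}(s)\geq 0$ condition is a strictly stronger necessary condition than the axis condition, not a consequence of it. Moreover, contrary to your claim that it ``cannot come from a direct signal construction,'' the classical proof is precisely such a construction: for $s_0\notin\lambda_A$ with $\textup{Re}(s_0)>0$ one takes the truncated exponential input $a(t)=a_0e^{s_0(t-T)}$ on $[0,T]$ with $x_1(0)=0$ (admissible in $\mathcal{L}_{2e}$), notes that the transient from the initial rest condition is dominated as $T\to\infty$, and obtains $\int_0^T\bigl(|y_r|^2-|y_p|^2\bigr)\,\textup{d}t\to \frac{1}{2\textup{Re}(s_0)}\,a_0^{*}\Phi(s_0)a_0\geq 0$. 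Without this step (or an explicit appeal to it), Part (1) as you have structured it only establishes the $\textup{Re}(s)=0$ case and the lemma's key claim remains unproved.
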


By performing the right co-prime factorization on~$\mathds{T}_{a y_r}$ and~$\mathds{T}_{a y_p}$, we obtain
\begin{align}\label{eq: RCPF}
    \mathds{T}_{a y_r} =  N_r M_O^{-1}, \quad \mathds{T}_{a y_p} =  N_p M_O^{-1}.
 \end{align}
The left co-prime factorization of~$\mathds{T}_{fr}$ and $\mathds{T}_{dr}$ are given by
\begin{align}\label{eq: LCPF}
    \mathds{T}_{fr} = M^{-1}_{I}N_f, \quad \mathds{T}_{dr} = M^{-1}_{I}N_d.
\end{align}
The detailed co-prime factorization approach can be found in~\cite[Chapter 4]{zhou1996robust}. 
Note that zeros of~$\mathds{T}_{a y_r}$ and~$\mathds{T}_{fr}$ coincide with those of~$N_r$ and~$N_f$, respectively. %We assume that~$\mathds{T}_{a y_r}$ and~$\mathds{T}_{fr}$ contain no NMP zeros.% in the following theorem.

To simplify notations, we define $\gamma^* \triangleq \|\Sigma\|^2_{\mathcal{L}_{2e},y_p \leftarrow y_r}$ and $\tilde{\gamma}^* \triangleq \|\tilde{\underline{\Sigma}}\|^2_{\mathcal{L}_{2e},f \leftarrow \tilde{d}}$.
Now, we are in the position to present the unified analysis for OOG and IIG in the following theorem.

\begin{theorem}[Unified analysis for OOG and IIG]\label{thm: unified expression}
    Consider the continuous-time LTI systems in~\eqref{eq: LTI-SS} and~\eqref{eq: LTI-SS Fault} with one-dimensional input and output signals, and given the co-prime factorization in~\eqref{eq: RCPF} and~\eqref{eq: LCPF} with stable $N_r$ and $N_f$, the OOG in~\eqref{eq: OOG} and the reformulated IIG in~\eqref{eq: Reform_IIG} satisfy:
    \begin{align}\label{eq: unified expression}
         \gamma^* \geq \left\| \frac{N_p}{N_r} \right\|^2_{\mathcal{H}_\infty} ~\text{and} 
         ~\tilde{\gamma}^* \geq 4 \left\| \frac{N_d}{N_f} \right\|^2_{\mathcal{H}_\infty}.
    \end{align}
    Furthermore, the equalities in~\eqref{eq: unified expression} can be achieved if the systems are subject to periodic trajectories.
\end{theorem}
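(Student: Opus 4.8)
The plan is to exploit the co-prime factorizations in~\eqref{eq: RCPF} and~\eqref{eq: LCPF} to cancel, in each metric, the common factor shared by the two transfer functions, thereby reducing each optimization to a ratio built only from the numerator factors; the frequency-domain inequality of Lemma~\ref{lem: FDI} then supplies the exact (periodic) characterization, while a single-frequency construction yields the general $\mathcal{L}_{2e}$ bound. Throughout I use the scalar nature of the one-dimensional signals, which turns every $\mathds{T}^{\top}(\bar{s})\mathds{T}(s)$ appearing in Lemma~\ref{lem: FDI} into a magnitude $|\mathds{T}(j\omega)|^2$ on the imaginary axis.

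First I would treat the OOG. Setting $v \triangleq M_O^{-1}[a]$, the factorization~\eqref{eq: RCPF} gives $y_r = N_r[v]$ and $y_p = N_p[v]$, eliminating the common factor $M_O^{-1}$. Applying part~(2) of Lemma~\ref{lem: FDI} to the scaled inequality $\gamma\|y_r\|^2_{\mathcal{L}_2} - \|y_p\|^2_{\mathcal{L}_2} \geq 0$ over $T$-periodic trajectories, the common magnitude $|M_O^{-1}(j\omega)|^2$ cancels and the necessary-and-sufficient condition collapses to $\gamma|N_r(j\omega)|^2 \geq |N_p(j\omega)|^2$ for all $\omega$; the least admissible $\gamma$ is exactly $\|N_p/N_r\|^2_{\mathcal{H}_\infty}$, establishing equality for periodic trajectories. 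For the general bound I would take $v$ to be a narrowband $\mathcal{L}_2$ signal concentrated near $\omega^\star \triangleq \arg\max_\omega |N_p(j\omega)/N_r(j\omega)|$; realized through $a = M_O[v]$ with the stable factor $M_O$, this is an admissible attack whose ratio $\|y_p\|^2_{\mathcal{L}_2}/\|y_r\|^2_{\mathcal{L}_2}$ approaches $|N_p(j\omega^\star)/N_r(j\omega^\star)|^2$, so that the supremum over the larger class $\mathcal{L}_{2e}$ in~\eqref{eq: OOG} gives $\gamma^* \geq \|N_p/N_r\|^2_{\mathcal{H}_\infty}$.

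The IIG follows by the dual argument using the left factorization~\eqref{eq: LCPF}. Applying $M_I$ to both sides of the equality constraint $\mathds{T}_{fr}[f] = \mathds{T}_{dr}[\tilde{d}]$ in~\eqref{eq: Reform_IIG} cancels $M_I^{-1}$ and leaves $N_f[f] = N_d[\tilde{d}]$. Writing $w \triangleq N_f[f] = N_d[\tilde{d}]$ and expanding both signals in Fourier components, the constraint forces $|F(j\omega_k)|/|\tilde{D}(j\omega_k)| = |N_d(j\omega_k)/N_f(j\omega_k)|$ at every active frequency, whence $\|f\|^2_{\mathcal{L}_2}/\|\tilde{d}\|^2_{\mathcal{L}_2} \leq \|N_d/N_f\|^2_{\mathcal{H}_\infty}$ with equality for a single sinusoid at the maximizer of $|N_d/N_f|$. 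Saturating $\|\tilde{d}\|^2_{\mathcal{L}_2}$ at its bound $4$ then yields $\tilde{\gamma}^* \geq 4\|N_d/N_f\|^2_{\mathcal{H}_\infty}$, with equality attained under periodic trajectories, in exact parallel with the OOG.

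The main obstacle I anticipate lies not in the frequency-domain bookkeeping but in the careful handling of the co-prime factors so that the signal substitutions are admissible: I must verify that $v = M_O^{-1}[a]$ and the realizations of $f,\tilde{d}$ from $w$ correspond to genuine elements of $\mathcal{L}_{2e}$, that the stated stability of $N_r$ and $N_f$ guarantees $y_r, w \in \mathcal{L}_2$ for $\mathcal{L}_2$ inputs, and that the maximizing frequency $\omega^\star$ does not land on an imaginary-axis zero of $N_r$ or $N_f$ where the ratio is ill-defined. For the equality claim I must also interpret the $\mathcal{L}_2$ norms as per-period energies so that Lemma~\ref{lem: FDI}(2) applies verbatim to the periodic construction.
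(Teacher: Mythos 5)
Your proposal is correct in substance and shares the paper's central mechanism --- cancelling the common co-prime factor ($M_O$ on the right for the OOG, $M_I$ on the left for the IIG) and invoking Lemma~\ref{lem: FDI} for the periodic equality --- but it reaches the general lower bound by a genuinely different route. The paper never constructs an input: it passes to the Lagrange dual of~\eqref{eq: OOG} (resp.~\eqref{eq: Reform_IIG}), shows by a boundary/homogeneity argument that the dual optimum equals the primal one, and then uses the \emph{necessity} direction of Lemma~\ref{lem: FDI}(1) to conclude that the frequency-domain generalized-eigenvalue problem has a larger feasible set and hence a smaller minimum, which yields $\gamma^* \geq \|N_p/N_r\|^2_{\mathcal{H}_\infty}$ and $\tilde{\gamma}^* \geq 4\|N_d/N_f\|^2_{\mathcal{H}_\infty}$ directly. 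You instead exhibit near-optimal narrowband inputs. That is more constructive --- it produces explicit near-worst-case attacks and near-undetectable faults, and for the IIG your Fourier-series argument even delivers the periodic upper bound without dualizing --- but it is also where all the technical burden you flag accumulates: the windowed signals must be \emph{causal} elements of $\mathcal{L}_2$ (a compactly frequency-supported signal cannot vanish on a half-line, so truncated sinusoids with $T\to\infty$ are needed), the realization $f = N_f^{-1}N_d[\tilde{d}]$ must be admissible even though only $N_f$, not $N_f^{-1}$, is assumed stable, and the maximizing frequency must avoid imaginary-axis zeros of $N_r$ or $N_f$; the paper's feasible-set-inclusion argument sidesteps these degenerate cases entirely, deferring them to Remark~1 and Lemma~\ref{lem: improper ratios}. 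One small point to make explicit: your phrase ``the least admissible $\gamma$ is exactly $\|N_p/N_r\|^2_{\mathcal{H}_\infty}$'' silently uses the identity $\sup\{\|y_p\|^2_{\mathcal{L}_2} : \|y_r\|^2_{\mathcal{L}_2}\leq 1\} = \inf\{\gamma : \|y_p\|^2_{\mathcal{L}_2}\leq\gamma\|y_r\|^2_{\mathcal{L}_2}\ \forall a\}$, which the paper establishes by noting the optimum sits on the constraint boundary; once you state it (it follows from quadratic homogeneity in $a$), both arguments deliver the theorem.
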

\begin{proof}
    The proof is relegated to Appendix.
\end{proof}

\begin{remark}[NMP zeros in~$\mathds{T}_{a y_r}$ and $\mathds{T}_{fr}$]
   The presence of NMP zeros in~$\mathds{T}_{a y_r}$ (or $\mathds{T}_{fr}$) leads to two cases. First, if $\mathds{T}_{a y_r}$ (or~$\mathds{T}_{fr}$) has at least one different NMP zero with~$\mathds{T}_{a y_p}$ (or~$\mathds{T}_{d r}$, respectively),
    the OOG (or IIG) value becomes infinite. This occurs because certain inputs aligned with these NMP zeros can yield zero output, as demonstrated in~\cite[Theorem 2]{AT2015Strategic}.
    Second, if all NMP zeros of~$\mathds{T}_{a y_r}$ (or~$\mathds{T}_{fr}$) are shared with~$\mathds{T}_{a y_p}$ (or~$\mathds{T}_{d r}$), they cancel out and the problem degrades to the case where~$\mathds{T}_{a y_r}$ (or $\mathds{T}_{fr}$) has no NMP zeros.
\end{remark}

Additionally, we establish relations between the developed IIG metric with the classical $\mathcal{H}_{\infty}/\mathcal{H}_{\infty}$ and $\mathcal{H}_{\infty}/\mathcal{H}_{\_}$ fault detection metrics in the following corollary.
A similar result for OOG can be obtained as well, and is omitted here.% to avoid repetition. 
\begin{corollary}[Relation with existing metrics]\label{cor: up_low_bound}
The bound of the IIG metric developed in~\eqref{eq: unified expression} satisfies: 
\begin{align}\label{eq: Up_low_bound}
     \frac{\| \mathds{T}_{dr} \|^2_{\mathcal{H}_{\infty}}}{\| \mathds{T}_{fr} \|^2_{\mathcal{H}_{\infty}}} \leq \left\| \frac{N_d}{N_f} \right\|^2_{\mathcal{H}_\infty}	\leq \frac{\| \mathds{T}_{dr} \|^2_{\mathcal{H}_{\infty}}}{\| \mathds{T}_{fr} \|^2_{\mathcal{H}_{\_}}}.
\end{align} 
\end{corollary}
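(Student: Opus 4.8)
The plan is to exploit the scalar (SISO) setting together with the fact that $\mathds{T}_{dr}$ and $\mathds{T}_{fr}$ in~\eqref{eq: LCPF} share the \emph{same} left denominator factor $M_I$. For scalar transfer functions the $\mathcal{H}_\infty$ norm is the supremum of the magnitude along the imaginary axis, $\|G\|_{\mathcal{H}_\infty} = \sup_\omega |G(j\omega)|$, while the $\mathcal{H}_{\_}$ index reduces to the infimum, $\|G\|_{\mathcal{H}_{\_}} = \inf_\omega |G(j\omega)|$. I would first record these two reductions, since they turn every quantity appearing in~\eqref{eq: Up_low_bound} into a pointwise magnitude extremum over frequency.

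The key step is a pointwise cancellation. Since $N_d = M_I \mathds{T}_{dr}$ and $N_f = M_I \mathds{T}_{fr}$, at each frequency the common factor $|M_I(j\omega)|$ cancels, so that
\begin{align*}
    \left| \frac{N_d(j\omega)}{N_f(j\omega)} \right| = \frac{|\mathds{T}_{dr}(j\omega)|}{|\mathds{T}_{fr}(j\omega)|},
\end{align*}
wherever $M_I(j\omega)\neq 0$. Taking the supremum over $\omega$ identifies the central quantity as $\|N_d/N_f\|^2_{\mathcal{H}_\infty} = \sup_\omega a(\omega)^2/b(\omega)^2$, where I abbreviate $a(\omega) \triangleq |\mathds{T}_{dr}(j\omega)|$ and $b(\omega) \triangleq |\mathds{T}_{fr}(j\omega)|$.

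Both bounds then follow from elementary supremum/infimum estimates. For the \textbf{upper bound}, at every frequency one has $a(\omega)^2 \le \sup_{\omega'} a(\omega')^2$ and $b(\omega)^2 \ge \inf_{\omega'} b(\omega')^2$, whence $a(\omega)^2/b(\omega)^2 \le (\sup_{\omega'} a^2)/(\inf_{\omega'} b^2)$; taking the supremum in $\omega$ and recalling $\inf_\omega b^2 = \|\mathds{T}_{fr}\|^2_{\mathcal{H}_{\_}}$ gives the right inequality. For the \textbf{lower bound}, let $\omega^\star$ be a frequency at which $a$ attains its supremum (attained because the magnitude of a proper rational function is continuous on the one-point compactification of the imaginary axis). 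Then
\begin{align*}
    \sup_\omega \frac{a(\omega)^2}{b(\omega)^2} \ge \frac{a(\omega^\star)^2}{b(\omega^\star)^2} = \frac{\sup_\omega a^2}{b(\omega^\star)^2} \ge \frac{\sup_\omega a^2}{\sup_\omega b^2},
\end{align*}
which is exactly the left inequality once the extrema are rewritten as $\mathcal{H}_\infty$ norms.

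I expect the only genuine care to lie in the degenerate cases rather than in the core estimates, which are routine. Specifically, if $\mathds{T}_{fr}$ (equivalently $N_f$) has a zero on or arbitrarily near the imaginary axis, then $\|\mathds{T}_{fr}\|_{\mathcal{H}_{\_}} = 0$ and the upper bound becomes $+\infty$; this is consistent with the middle term blowing up exactly as described in the Remark on NMP zeros, so the chain still holds as an inequality in the extended reals. I would also flag that the cancellation step requires $M_I(j\omega)\neq 0$, which holds off a set that does not affect the suprema and infima by continuity.
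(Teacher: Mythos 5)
Your proof is correct and follows essentially the same route as the paper's: the cancellation $N_d/N_f = \mathds{T}_{fr}^{-1}\mathds{T}_{dr}$ via the common factor $M_I$ is exactly the paper's key identity, and your two pointwise sup/inf estimates are just the unpacked scalar versions of the norm inequalities $\|\mathds{T}_{fr}^{-1}\|_{\mathcal{H}_{\_}}\|\mathds{T}_{dr}\|_{\mathcal{H}_\infty} \leq \|\mathds{T}_{fr}^{-1}\mathds{T}_{dr}\|_{\mathcal{H}_\infty} \leq \|\mathds{T}_{fr}^{-1}\|_{\mathcal{H}_\infty}\|\mathds{T}_{dr}\|_{\mathcal{H}_\infty}$ that the paper invokes directly. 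Your added attention to the degenerate case $\|\mathds{T}_{fr}\|_{\mathcal{H}_{\_}}=0$ and to attainment of the supremum is a harmless refinement, not a different argument.
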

\begin{proof}
By definition, the~$\mathcal{H}_{\infty}$ norm $\|\mathds{T}_{fr}\|_{\mathcal{H}_{\infty}}$ and the~$\mathcal{H}_{\_}$ index $\|\mathds{T}_{fr}\|_{\mathcal{H}_{\_}}$ represent the largest and smallest singular values of $\mathds{T}_{fr}$, respectively. 
Furthermore, the two relations hold: 
$\|\mathds{T}^{-1}_{fr}\|_{\mathcal{H}_{\_}} = \|\mathds{T}_{fr}\|^{-1}_{\mathcal{H}_{\infty}}$ ~\text{and}~ $\|\mathds{T}^{-1}_{fr}\|_{\mathcal{H}_{\infty}} = \|\mathds{T}_{fr}\|^{-1}_{\mathcal{H}_{\_}}$,
leading to
\begin{align*}
    &\frac{\| \mathds{T}_{dr} \|^2_{\mathcal{H}_{\infty}}}{\| \mathds{T}_{fr} \|^2_{\mathcal{H}_{\infty}}}
    =\|\mathds{T}^{-1}_{fr}\|^2_{\mathcal{H}_{\_}}\| \mathds{T}_{dr} \|^2_{\mathcal{H}_{\infty}} 
    \leq \|\mathds{T}^{-1}_{fr} \mathds{T}_{dr}\|^2_{\mathcal{H}_{\infty}}, \\
    &\|\mathds{T}^{-1}_{fr} \mathds{T}_{dr}\|^2_{\mathcal{H}_{\infty}}
    \leq \|\mathds{T}^{-1}_{fr} \|^2_{\mathcal{H}_{\infty}} \|\mathds{T}_{dr}\|^2_{\mathcal{H}_{\infty}} = \frac{\| \mathds{T}_{dr} \|^2_{H_{\infty}}}{\| \mathds{T}_{fr} \|^2_{H_{\_}}}.
\end{align*}
From the left co-prime factorization given in~\eqref{eq: LCPF}, we have $\mathds{T}^{-1}_{fr} \mathds{T}_{dr} = N^{-1}_f N_d$. This completes the proof.  
\end{proof}

\subsection{Fundamental limitation analysis}
From the result presented in Theorem~\ref{thm: unified expression}, both bounds in~\eqref{eq: unified expression} are expressed as the~$\mathcal{H}_{\infty}$ norm of a transfer function.  
This motivates us to analyze their fundamental limitations based on whether they are proper or not.
For the first scenario, where~$N_p/N_r$ and $N_d/N_f$ are non-proper ratios, the result is provided in the following lemma.

\begin{lemma}[Performance limitations in non-proper case]\label{lem: improper ratios} 
If the transfer functions $N_p/N_r$ and $N_d/N_f$ are not proper, the corresponding performance bounds for OOG and IIG metrics are infinite, i.e.,~$\left\| N_p/N_r \right\|^2_{\mathcal{H}_\infty} = \left\| N_d/N_f \right\|^2_{\mathcal{H}_\infty} = \infty$.
\end{lemma}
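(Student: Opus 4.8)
The plan is to exploit the growth of a non-proper rational function at high frequency. Since both ratios are handled identically, I would focus on $N_p/N_r$ and then transcribe the argument verbatim for $N_d/N_f$. Writing $N_p$ and $N_r$ as polynomials in $s$, the hypothesis that $N_p/N_r$ is not proper is precisely the statement that $\deg N_p > \deg N_r$. The key idea is that the $\mathcal{H}_\infty$ norm of a scalar transfer function is controlled from below by the magnitude of its frequency response along the imaginary axis, so it suffices to show that this magnitude is unbounded.

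Concretely, I would recall that for a one-dimensional transfer function $G(s)$ one has $\|G\|_{\mathcal{H}_\infty} \geq \sup_{\omega \in \mathbb{R}} |G(j\omega)|$, and then examine the behavior as $\omega \to \infty$. Setting $k \triangleq \deg N_p - \deg N_r \geq 1$ and dividing numerator and denominator by the leading power of the denominator, the leading coefficients dominate and one obtains $|G(j\omega)| \sim c\,|\omega|^{k}$ for some constant $c > 0$, whence $|G(j\omega)| \to \infty$ as $\omega \to \infty$. Therefore $\sup_{\omega \in \mathbb{R}} |G(j\omega)| = \infty$, and the displayed lower bound forces $\|N_p/N_r\|^2_{\mathcal{H}_\infty} = \infty$. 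The same computation applied to $N_d/N_f$ gives $\|N_d/N_f\|^2_{\mathcal{H}_\infty} = \infty$, completing the argument. It is worth noting that the standing stability of $N_r$ and $N_f$ assumed in Theorem~\ref{thm: unified expression} rules out imaginary-axis poles as an alternative source of blow-up, so non-properness is indeed the only mechanism at play here.

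I do not expect a genuine obstacle, since the result is essentially a restatement of what non-properness means. The one point requiring mild care is the interpretation of the $\mathcal{H}_\infty$ norm for a function that fails to lie in the Hardy space $\mathcal{H}_\infty$: strictly speaking, a non-proper rational function is not an element of $\mathcal{H}_\infty$, so the quantity $\|N_p/N_r\|_{\mathcal{H}_\infty}$ should be read as the (divergent) supremum of the frequency-response magnitude, consistent with the convention that unbounded functions are assigned infinite norm. Making this convention explicit is the only step I would flag, and it is precisely the interpretation under which the lemma is meaningful.
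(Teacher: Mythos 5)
Your argument is correct and matches the paper's own justification, which simply observes that for a non-proper transfer function the frequency-response magnitude diverges as $\omega \to \infty$; you merely spell out the degree count and the leading-order asymptotics that the paper leaves implicit. The remark about the convention for assigning an infinite $\mathcal{H}_\infty$ norm to a function outside the Hardy space is a reasonable clarification but does not change the substance.
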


The result in Lemma~\ref{lem: improper ratios} is straightforward. For an non-proper transfer function~$\mathds{T}$, it is easy to verify that $\|\mathds{T}(j\omega)\|_{\mathcal{H}_\infty} \rightarrow \infty$ as $\omega \rightarrow \infty$.
This implies two key interpretations:
(\Romannum{1}) if $N_p/N_r$ is non-proper for OOG, high-frequency attack signals can induce significant performance degradation while remaining stealthy;
(\Romannum{2}) if $N_d/N_f$ is non-proper for IIG, high-frequency faults can be masked by high-frequency disturbances, rendering faults difficult to detect.

To derive the performance limitations in~\eqref{eq: unified expression} for proper systems, we introduce the following notation. Let us define:
\begin{align}\label{eq: OOG fac}
    S_O %= 
    \triangleq \frac{N_p}{N_r}, \quad
    P_O %= 
    \triangleq 1-\frac{N_p}{N_r}.
\end{align}
Assume $S_O$ and $P_O$ have $n_\alpha$ and $n_\beta$ numbers of NMP zeros, respectively. 
The NMP zeros of~$S_O$ are denoted as~$\alpha_i \in \mathcal{Z}_{S_O}, i=\{1,\dots,n_{\alpha}\}$, and those of~$P_O$ are denoted as $\beta_i \in \mathcal{Z}_{P_O}, i=\{1,\dots,n_{\beta}\}$.
Then, $S_O$ and $P_O$ are factorized as
\begin{align}\label{eq: BlaschkeFac}
    S_O = \tilde{S}_O \mathcal{B}_{S_O}, \quad
    P_O= \tilde{P}_O \mathcal{B}_{P_O},
\end{align}
where $\tilde{S}_O$ and $\tilde{P}_O$ are the corresponding minimum-phase parts. 
The Blaschke products of NMP zeros in $S_O$ and $P_O$ are denoted as $\mathcal{B}_{S_O}$ and $\mathcal{B}_{P_O}$, which are given by
\begin{align*}
    \mathcal{B}_{S_O}(s) = \prod^{n_\alpha}_{i=1} \frac{s-\alpha_i}{s+\bar{\alpha}_i}, \quad
    \mathcal{B}_{P_O}(s) = \prod^{n_\beta}_{i=1} \frac{s-\beta_i}{s+\bar{\beta}_i}.
\end{align*}
Note that $\mathcal{B}_{S_O}$ and $\mathcal{B}_{P_O}$ are all-pass factors, leading to the fact that $|S_O(j\omega)|=|\tilde{S}_O(j\omega)|$ and $|P_O(j\omega)|=|\tilde{P}_O(j\omega)|$. If the set of NMP zeros is empty, we define the corresponding Blaschke product to be $1$.
Similarly, for IIG, we define
\begin{align*}
    S_I \triangleq \frac{N_d}{N_f}, \quad 
    P_I \triangleq 1-\frac{N_d}{N_f},
\end{align*}
with their NMP zeros denoted as~$\mu_{i} \in \mathcal{Z}_{S_I}, i=\{1,\dots,n_{\mu}\}$ and ~$\nu_{i} \in \mathcal{Z}_{P_I}, i=\{1,\dots,n_{\nu}\}$ by assuming $S_I$ and $P_I$ have $n_\mu$ and $n_\nu$ numbers of NMP zeros, respectively.
Factorizing $S_I$ and $P_I$ yields
\begin{align}\label{eq: IIG fac}
    S_I = \tilde{S}_I \mathcal{B}_{S_I}, \quad 
    P_I= \tilde{P}_I \mathcal{B}_{P_I},
\end{align}
where~$\tilde{S}_I$ and $\tilde{P}_I$ are the minimum-phase parts, and the Blaschke products $\mathcal{B}_{S_I}$ and $\mathcal{B}_{P_I}$ are in the form of 
\begin{align}
    \mathcal{B}_{S_I}(s) = \prod^{n_\mu}_{i=1} \frac{s-\mu_i}{s+\bar{\mu}_i}, \quad
    \mathcal{B}_{P_I}(s) = \prod^{n_\nu}_{i=1} \frac{s-\nu_i}{s+\bar{\nu}_i}.
\end{align}
Then, the limitations of OOG and IIG are ready to be presented below.

\begin{theorem}[Performance limitations in proper case]\label{thm: performance}
Consider the expressions for OOG and IIG developed in~\eqref{eq: unified expression} and the factorization in~\eqref{eq: OOG fac} and~\eqref{eq: IIG fac}. 
The performance limitations for proper $S_O$ and $S_I$ satisfy
\begin{align}\label{eq: performance bound}
        & ||S_O||_{\mathcal{H}_{\infty}} \geq \max_{\alpha_h\in\mathcal{Z}_{S_O},\beta_k\in\mathcal{Z}_{P_O}} \left\{ |\mathcal{B}^{-1}_{S_O}(\beta_k)|, |\mathcal{B}^{-1}_{P_O}(\alpha_h)|-1 \right\},  \notag\\
        &  ||S_I||_{\mathcal{H}_{\infty}} \geq \max_{\mu_h\in\mathcal{Z}_{S_I},\nu_k\in\mathcal{Z}_{P_I}} \left\{ |\mathcal{B}^{-1}_{S_I}(\nu_k)|, |\mathcal{B}^{-1}_{P_I}(\mu_h)|-1 \right\}.
\end{align}
\end{theorem}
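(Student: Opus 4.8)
The plan is to exploit the algebraic identities $S_O + P_O = 1$ and $S_I + P_I = 1$, which reproduce the sensitivity/complementary-sensitivity relation underlying the classical Freudenberg--Looze limitations~\cite{freudenberg1985right}, and to combine them with the all-pass property of the Blaschke products and the maximum modulus principle (the pointwise specialization of the Poisson integral relation). A preliminary observation is that, because $S_O$ and $P_O$ sum to the constant $1$, they cannot share a common zero (a common zero would force $0=1$); the same holds for $S_I$ and $P_I$. This disjointness guarantees that the Blaschke products do not vanish at the evaluation points used below.

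\emph{First bound.} Fix any $\beta_k \in \mathcal{Z}_{P_O}$. Since $P_O(\beta_k)=0$ and $S_O = 1 - P_O$ is analytic at $\beta_k$, the identity gives $S_O(\beta_k)=1$. Substituting $S_O = \tilde{S}_O \mathcal{B}_{S_O}$ and using $\mathcal{B}_{S_O}(\beta_k)\neq 0$ yields $\tilde{S}_O(\beta_k)=\mathcal{B}^{-1}_{S_O}(\beta_k)$. Because $\mathcal{B}_{S_O}$ is all-pass we have $|S_O(j\omega)|=|\tilde{S}_O(j\omega)|$, so $\|S_O\|_{\mathcal{H}_\infty}=\|\tilde{S}_O\|_{\mathcal{H}_\infty}$, and dividing out the factors $(s-\alpha_i)/(s+\bar{\alpha}_i)$ cancels exactly the RHP zeros of $S_O$ without creating RHP poles, so $\tilde{S}_O \in \mathcal{H}_\infty$. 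Applying the maximum modulus principle to $\tilde{S}_O$ at the interior point $\beta_k$ then gives $\|S_O\|_{\mathcal{H}_\infty}=\|\tilde{S}_O\|_{\mathcal{H}_\infty}\geq |\tilde{S}_O(\beta_k)| = |\mathcal{B}^{-1}_{S_O}(\beta_k)|$.

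\emph{Second bound.} Here I would swap the roles of $S_O$ and $P_O$. For any $\alpha_h \in \mathcal{Z}_{S_O}$ we have $S_O(\alpha_h)=0$, hence $P_O(\alpha_h)=1$, and the identical Blaschke/maximum-modulus argument applied to $\tilde{P}_O$ gives $\|P_O\|_{\mathcal{H}_\infty}\geq |\mathcal{B}^{-1}_{P_O}(\alpha_h)|$. To transfer this to $S_O$, I use $P_O = 1 - S_O$ together with the triangle inequality $\|P_O\|_{\mathcal{H}_\infty}\leq 1 + \|S_O\|_{\mathcal{H}_\infty}$, which rearranges to $\|S_O\|_{\mathcal{H}_\infty}\geq |\mathcal{B}^{-1}_{P_O}(\alpha_h)|-1$. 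Taking the maximum of the two families of bounds over all $\beta_k$ and $\alpha_h$ produces the stated inequality for $S_O$; the bound for $S_I$ follows verbatim after the substitution $(S_O,P_O,\alpha_h,\beta_k)\mapsto(S_I,P_I,\mu_h,\nu_k)$.

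I expect the only delicate point to be verifying that the minimum-phase factors $\tilde{S}_O$ and $\tilde{P}_O$ genuinely lie in $\mathcal{H}_\infty$: this needs $S_O$ and $P_O$ to be proper (ruling out the blow-up of Lemma~\ref{lem: improper ratios}) and free of RHP poles, the latter holding precisely when the NMP zeros of $N_r$ are cancelled in the ratio $N_p/N_r$, as discussed after Theorem~\ref{thm: unified expression}. Once analyticity in the closed RHP is secured, the maximum modulus principle — equivalently, the Poisson integral whose unit-mass kernel bounds $\log|\tilde{S}_O(\beta_k)|$ by $\log\|\tilde{S}_O\|_{\mathcal{H}_\infty}$ — delivers the pointwise domination at once, so the remaining steps are elementary algebra.
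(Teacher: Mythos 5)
Your proposal is correct and follows essentially the same route as the paper's proof: evaluate $S_O$ at the NMP zeros of $P_O$ (and vice versa), exploit the all-pass property of the Blaschke factors, bound the interior value $|\tilde{S}_O(\beta_k)|=|\mathcal{B}^{-1}_{S_O}(\beta_k)|$ by $\|S_O\|_{\mathcal{H}_\infty}$ (the paper does this via the Poisson integral with its unit-mass kernel, which is precisely the maximum-modulus step you invoke and correctly identify as equivalent), and finish with $\|P_O\|_{\mathcal{H}_\infty}\leq 1+\|S_O\|_{\mathcal{H}_\infty}$. Your additional remarks on the disjointness of $\mathcal{Z}_{S_O}$ and $\mathcal{Z}_{P_O}$ and on verifying $\tilde{S}_O,\tilde{P}_O\in\mathcal{H}_\infty$ are consistent with, and slightly more explicit than, the paper's treatment.
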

\begin{proof}
    The proof is relegated to Appendix.
\end{proof}
Theorem~\ref{thm: performance} relates the performance limitations of OOG and IIG to NMP zeros of the system.
For instance, the lower bound of $\|S_O\|_{\mathcal{H}_{\infty}}$ is larger than or equal to~$1$.
Specifically, if at most one of $S_O$ and $P_O$ has NMP zeros, the lower bound of $\|S_O\|_{\mathcal{H}_{\infty}}$ is~$1$, indicating that the undetectable attack satisfying $\|y_p\|^2_{\mathcal{L}_2} \geq 1$ always exists.
The lower bound of $\|S_O\|_{\mathcal{H}_{\infty}}$ is strictly larger than~$1$ if both $S_O$ and $P_O$ have NMP zeros, and is determined by the distance between their NMP zeros. When the NMP zeros~$\alpha_h$ and $\beta_k$ are far apart, the lower bound approaches $1$. Furthermore, the bound increases significantly when the NMP zeros~$\alpha_h$ and $\beta_k$ are close to each other, indicating that stealthy attacks can generate significant effects.
 %indicating a large disruption level with undetectable attacks. 
%Moreover, it is worth pointing out that $S_O$ and $P_O$ cannot share common NMP zeros because such a zero would also be an NMP zero of $N_r$, which, by assumption, has no NMP zeros in this study. 

%%%%%%%%%%%%%%%%%%%%%%%%%%%%%%%%%%%%%%%%%%%%%%%%%%%%%%%%%%%%%%%%%%%%%%%%%%%%%%%%
\section{Numerical examples}\label{sec: numerical example}

This section mainly focuses on validating the obtained fundamental bound for IIG, given that a validation for OOG can be done similarly.
Consider the transfer functions corresponding to system~\eqref{eq: LTI-SS Fault} as follows:
\begin{align*}
    &\mathds{T}_{fr}(s) = \frac{(s+0.1)(s+0.2)(s+0.6)}{(s+0.3)(s+0.4)(s+0.5)}, \\
    &\mathds{T}_{dr}(s) = \frac{(s+1)(s-0.04)(s-\tau)}{(s+0.3)(s+0.4)(s+0.5)},
\end{align*}
where $\mathds{T}_{dr}$ contains a NMP zero at $0.04$ and a zero~$\tau$.
In the following, we examine how the performance bound in~\eqref{eq: performance bound} changes with respect to $\tau$ that varies between $-20$ and $20$.

\begin{figure}
\centering
  \centering
  \includegraphics[scale=.38]{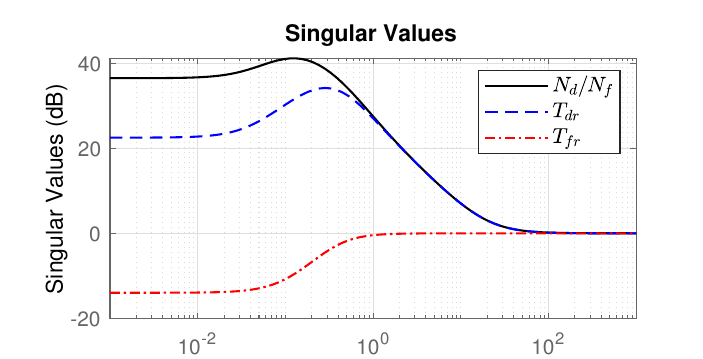}
  \captionof{figure}{\small Singular values of $N_d/N_f$, $\mathds{T}_{dr}$, and $\mathds{T}_{fr}$.}
  \label{fig: FrequencyResponse}
  \vspace{-0.4cm}
\end{figure}

\begin{figure}
\centering
\begin{minipage}{.45\columnwidth}
  \centering
  \includegraphics[scale=.35]{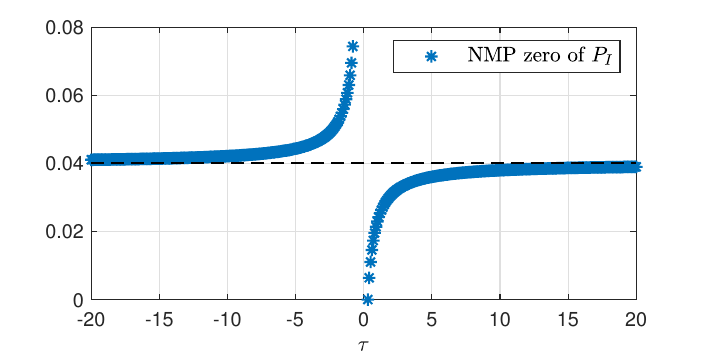}
  \captionof{figure}{\small NMP zero of $P_I$ varying with $\tau$.}
  \label{fig: NMP zero}
\end{minipage}
\hspace{10pt}
\begin{minipage}{.45\columnwidth}
  \centering
  \includegraphics[scale=.35]{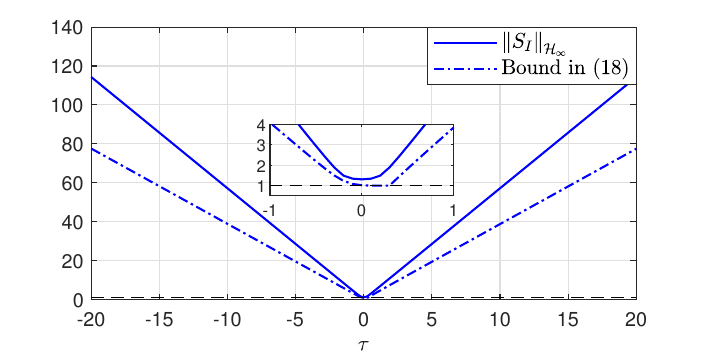}
  \captionof{figure}{\small Performance bound for $\|S_I\|_{\mathcal{H}_{\infty}}$.}
  \label{fig: PerformanceBound}
\end{minipage}
\vspace{-0.7cm}
\end{figure}

The first result is presented in
Fig.~\ref{fig: FrequencyResponse}, where the singular values of $N_d/N_f$, $\mathds{T}_{dr}$, and $\mathds{T}_{fr}$ with $\tau=20$ are shown. 
It can be observed that~$\|\mathds{T}_{dr}\|_{\mathcal{H}_{\infty}} = 57.17$, $\|\mathds{T}_{fr}\|_{\mathcal{H}_{\infty}} = 1$, $\|\mathds{T}_{fr}\|_{\mathcal{H}_{\_}} = 0.2$, and $\|N_d/N_f\|_{\mathcal{H}_\infty} = 114.33$, which indicates 
$\| \mathds{T}_{dr} \|_{\mathcal{H}_{\infty}} / \| \mathds{T}_{fr} \|_{\mathcal{H}_{\infty}} \leq 114.33	\leq\| \mathds{T}_{dr} \|_{\mathcal{H}_{\infty}} / \| \mathds{T}_{fr} \|_{\mathcal{H}_{\_}} $.
This verifies the result in Corollary~\ref{cor: up_low_bound} and also demonstrates that the IIG metric is less conservative.

The simulation results for the performance bound are presented in Figs.~\ref{fig: NMP zero}-\ref{fig: PerformanceBound}. 
As shown in Fig.~\ref{fig: NMP zero}, as $\tau$ increases in both directions, the zero of $N_f-N_d$ approaches that of $N_d$ at $0.04$. 
%In this example, as the magnitude of $\tau$ increases in both directions, the NMP zero of $N_f-N_d$ gets close to that of $N_d$ at $0.04$, as shown in Fig.~\ref{fig: NMP zero}. 
As a result, the lower bound of $\|S_I\|_{\mathcal{H}_{\infty}}$ increases when $\tau$ deviates from $0$, as depicted in Fig.~\ref{fig: PerformanceBound}. This aligns with the theoretical results in Theorem \ref{thm: performance}.
Moreover, as depicted in the zoom-in figure inside Fig.~\ref{fig: PerformanceBound}, when $\tau$ is around zero, the lower bound is~$1$. This suggests that either $N_d$ or $N_f-N_d$ is free of NMP zeros in this regime.

\addtolength{\textheight}{-3cm}   % This command serves to balance the column lengths
                                  % on the last page of the document manually. It shortens
                                  % the textheight of the last page by a suitable amount.
                                  % This command does not take effect until the next page
                                  % so it should come on the page before the last. Make
                                  % sure that you do not shorten the textheight too much.

%%%%%%%%%%%%%%%%%%%%%%%%%%%%%%%%%%%%%%%%%%%%%%%%%%%%%%%%%%%%%%%%%%%%%%%%%%%%%%%%

   % \begin{figure}[thpb]
   %    \centering
   %    %\includegraphics[scale=1.0]{figurefile}
   %    \caption{Inductance of oscillation winding on amorphous
   %     magnetic core versus DC bias magnetic field}
   %    \label{figurelabel}
   % \end{figure}

%%%%%%%%%%%%%%%%%%%%%%%%%%%%%%%%%%%%%%%%%%%%%%%%%%%%%%%%%%%%%%%%%%%%%%%%%%%%%%%%
\section{CONCLUSIONS}\label{sec: conclusions}
In this paper, we proposed a novel IIG metric for robust fault detection 
inspired by the existing OOG metric.
A unified analysis was carried out on the performance limitations of the two metrics.
The future work will focus on the extension of the results to multiple-input-multiple-output systems.
Another research direction is to develop tractable methods for solving the optimal IIG.

\appendix

\begin{proof}[Proof of Theorem~\ref{thm: unified expression}]
We first prove the result for OOG.
Consider the original optimization problem~\eqref{eq: OOG} for OOG. 
Its Lagrange dual problem is derived as:
\begin{align}\label{eq: dual_OOG}
    \gamma^*_d = \min_{\gamma_d \in  \mathbb{R}_+} ~\gamma_d \quad \text{s.t.} ~\gamma_d \|y_r\|^2_{\mathcal{L}_2} \geq \|y_p\|^2_{\mathcal{L}_2},
\end{align}
where~$\gamma_d $ is the dual variable and $\gamma^*_d$ is the optimal value.
According to the property of the dual function, it yields an upper bound for the optimal value~$\gamma^*$ of~\eqref{eq: OOG}, i.e.,~$\gamma^*_d \geq \gamma^*$. 

Meanwhile, observe that~\eqref{eq: OOG} has a concave objective function and a convex constraint. Its optimal solution is achieved on the boundary of the constraint set, i,e,.~$\| y_r \|^2_{\mathcal{L}_2} = 1$.  
Therefore,~\eqref{eq: OOG} can be equivalently written as
$\gamma^* =  \sup_{a \in \mathcal{L}_{2e}, \|y_r\|^2_{\mathcal{L}_2} = 1} ~\frac{\|y_p\|^2_{\mathcal{L}_2}}{\|y_r\|^2_{\mathcal{L}_2}}$.  
This implies that~$\gamma^* \| y_r \|^2_{\mathcal{L}_2} \geq \| y_p \|^2_{\mathcal{L}_2}$ and~$\gamma^*$ belongs to the feasible solution set to~\eqref{eq: dual_OOG}. Consequently, 
$\gamma^\star \geq \gamma_d^\star$ since the latter is the optimal solution to~\eqref{eq: dual_OOG}. As a result, $\gamma_d^* = \gamma^\star$.

Based on Lemma~\ref{lem: FDI} and the right co-prime factorization in~\eqref{eq: RCPF}, a frequency domain condition for the inequality~$\gamma_d \|y_r\|^2_{\mathcal{L}_2} \geq \|y_p\|^2_{\mathcal{L}_2}$ in~\eqref{eq: dual_OOG} can be derived.
For~$s \notin \lambda_A, \text{Re}(s)=0$, we have
\begin{align*}
&\gamma_d \mathds{T}^{\top}_{a y_r}(\bar{s}) \mathds{T}_{a y_r}(s) \geq \mathds{T}^{\top}_{a y_p} (\bar{s})\mathds{T}_{a y_p}(s) \\
% \Leftrightarrow~ 
% &\gamma_d M_O^{-\top} N^{\top}_r  N_r M_O^{-1} \geq M_O^{-\top}N^{\top}_p  N_p M_O^{-1}\\ 
\Leftrightarrow~ 
& M_O^{-\top} (\gamma_d  N^{\top}_r  N_r - N^{\top}_p  N_p) M_O^{-1} \geq 0 \\
\Leftrightarrow~ &\gamma_d  N^{\top}_r  N_r - N^{\top}_p  N_p \geq 0 .
\end{align*}
Substituting the above inequality into~\eqref{eq: dual_OOG} yields 
\begin{align}\label{eq: FDP_OOG}
    &\gamma^*_{\omega}= \min_{\gamma_{d} \in \mathbb{R}_+} ~\gamma_{d} \\
    &\text{s.t.} 
    ~\gamma_d  N^{\top}_r(\bar{s})  N_r(s) - N^{\top}_p(\bar{s})  N_p(s) \geq 0,  s \notin \lambda_A, \textup{Re}(s) \geq 0, \notag
\end{align}
where~$\gamma^*_{\omega}$ is the optimal value of~\eqref{eq: FDP_OOG}.
Note that~\eqref{eq: FDP_OOG} is a generalized eigenvalue problem and~$\gamma^*_{\omega}$ essentially corresponds to the maximum generalized eigenvalue of the matrix pencil~$( N^{\top}_p(\bar{s})  N_p(s),N^{\top}_r(\bar{s})  N_r(s))$, i.e.,
\begin{align*}
    \gamma^*_{\omega} &= \sup_{s \notin \lambda_A, \textup{Re}(s) \geq 0} \lambda_{\textup{max}} \left( (N^{\top}_r(\bar{s})  N_r(s))^{-1} N^{\top}_p(\bar{s})  N_p(s) \right) \\ 
    &=\left\| \frac{N_p}{N_r} \right\|^2_{\mathcal{H}_\infty}.
\end{align*}

From Lemma~\ref{lem: FDI}, the frequency-domain inequality is a necessary condition for~$\gamma_d \|y_r\|^2_{\mathcal{L}_2} \geq \|y_p\|^2_{\mathcal{L}_2}$ with all possible trajectories. Therefore, the optimal value of~\eqref{eq: dual_OOG} is bounded from below by that of~\eqref{eq: FDP_OOG}, thus
$\gamma^* = \gamma^*_d \geq \gamma^*_{\omega}$.
Additionally, for system~\eqref{eq: LTI-SS} subject to periodic trajectories, the frequency-domain inequality is a necessary and sufficient condition. Therefore, the equalities~$\gamma^* = \gamma^*_d = \gamma^*_{\omega}$ hold for periodic trajectories. This completes the first part of the proof.

Subsequently, we prove the result for the reformulated IIG in~\eqref{eq: Reform_IIG}.
Since~$r = \mathds{T}_{dr}[d] + \mathds{T}_{fr}[f]$,~$\mathds{T}_{dr}[\tilde{d}] =  \mathds{T}_{fr}[f]$, and~$-2d=\tilde{d}$ with~$\xi=2$, we have $\tilde{d} = 2 \mathds{T}^{-1}_{dr} [r]$ and~$f = 2 \mathds{T}^{-1}_{fr} [r]$. The Lagrange dual problem of~\eqref{eq: Reform_IIG} is obtained as
\begin{align}\label{eq: dual_reform_IIG}
     \tilde{\gamma}^*_d = \min_{\tilde{\gamma}_d \in \mathbb{R}_+} ~\tilde{\gamma}_d \quad \text{s.t.} ~\tilde{\gamma}_d \|\mathds{T}^{-1}_{dr} [r]\|^2_{\mathcal{L}_2} \geq 4\|\mathds{T}^{-1}_{fr} [r]\|^2_{\mathcal{L}_2},
\end{align}
where~$\tilde{\gamma}_d$ is the dual variable.
The rest of this part follows a similar procedure as the first part. 

First, it can be shown that the optimal value of~\eqref{eq: dual_reform_IIG} is equal to that of~\eqref{eq: Reform_IIG}, i.e.,~$\tilde{\gamma}^*_d = \tilde{\gamma}^*$.
Then, using the left co-prime factorization in~\eqref{eq: LCPF} and Lemma~\ref{lem: FDI}, we obtain the frequency-domain inequality for~$\tilde{\gamma}_d\|\mathds{T}^{-1}_{dr} [r]\|^2_{\mathcal{L}_2} \geq 4\|\mathds{T}^{-1}_{fr} [r]\|^2_{\mathcal{L}_2}$, i.e.,
\begin{align*}
    &\tilde{\gamma}_d \mathds{T}^{-\top}_{d r}(\bar{s}) \mathds{T}^{-1}_{dr}(s) \geq 4 \mathds{T}^{-\top}_{fr}(\bar{s}) \mathds{T}^{-1}_{fr}(s),% \\
    % \Leftrightarrow~
    % &\tilde{\gamma}_d M^{\top}_{I} N^{-\top}_d N^{-1}_d M_I   \geq 4 M^{\top}_{I}N^{-\top}_f N^{-1}_f M_I \\
    % \Leftrightarrow~ 
    % &M^{\top}_{I} (\tilde{\gamma}_d N^{-\top}_d N^{-1}_d - 4 N^{-\top}_f N^{-1}_f)M^{\top}_I \geq 0 \\
    %\Leftrightarrow~ 
    %&\tilde{\gamma}_d N^{-\top}_d N^{-1}_d - 4 N^{-\top}_f N^{-1}_f \geq 0.
\end{align*}
which is equivalent to $\tilde{\gamma}_d N^{-\top}_d N^{-1}_d - 4 N^{-\top}_f N^{-1}_f \geq 0$. Substituting the frequency-domain inequality into~\eqref{eq: dual_reform_IIG} yields
\begin{align*}
    &\tilde{\gamma}^*_{\omega}= \min_{\tilde{\gamma}_d \in \mathbb{R}_+} ~\tilde{\gamma}_{d} \notag\\
    &\text{s.t.} 
    ~\tilde{\gamma}_d N^{-\top}_d N^{-1}_d - 4 N^{-\top}_f N^{-1}_f \geq 0,  s \notin \lambda_A, \textup{Re}(s) \geq 0.
\end{align*}
The optimal value~$\tilde{\gamma}^*_{\omega}$ is the maximum generalized eigenvalue of the matrix pencil~$(4 N^{-\top}_f(\bar{s}) N^{-1}_f(s), N^{-\top}_d(\bar{s}) N^{-1}_d(s) )$ and equals
$\tilde{\gamma}^*_{\omega} = 4 \left\| {N_d}/{N_f} \right\|^2_{\mathcal{H}_\infty}$.
For systems of all trajectories,~$\tilde{\gamma}^*_{\omega}$ is a lower bound for~$\tilde{\gamma}^*$, i.e.,~$\tilde{\gamma}^* \geq \tilde{\gamma}^*_{\omega}$.
For systems subject to periodic trajectories, the equality is achieved, as the frequency-domain inequality becomes a necessary and sufficient condition.
This completes the proof.
\end{proof}

%The following lemma is introduced for proof of Theorem~\ref{thm: performance}.

\begin{lemma}[Poisson integral formula~{\cite[Corollary~A.6.3]{seron2012fundamental}}]
    Let $f(s)$ be analytical and of bounded magnitude in the right half of the complex plane. Consider a point $s_0=\sigma_0+j\omega_0$ with $\textup{Re}(s_0)>0$. It holds that
    \begin{align}\label{eq: Poisson}
        \log |f(s_0)| = \frac{1}{\pi} \int_{-\infty}^{\infty} \log |f(j \omega)| \frac{\sigma_0}{\sigma_0^2+(\omega_0-\omega)^2} ~\textup{d} \omega.
    \end{align}
\end{lemma}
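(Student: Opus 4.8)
The plan is to recognize the right-hand side as the reconstruction of a harmonic function from its boundary values via the half-plane Poisson kernel, and to derive the identity by a Cauchy-integral argument. First I would set $g(s) \triangleq \log f(s)$, which is analytic in the open right half-plane under the standing assumption that $f$ is analytic and \emph{zero-free} there (this holds in our application, where $f$ is a minimum-phase factor after the Blaschke products have been extracted, so that $g$ admits a single-valued analytic branch); consequently $\textup{Re}\,g(s) = \log|f(s)|$ is harmonic. Since the stated formula is exactly the real part of the complex identity
\begin{equation*}
    g(s_0) = \frac{1}{\pi}\int_{-\infty}^{\infty} g(j\omega)\,\frac{\sigma_0}{\sigma_0^2+(\omega_0-\omega)^2}\,d\omega,
\end{equation*}
and the Poisson kernel is real, it suffices to establish this complex version and then take real parts.

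To obtain the complex identity, I would apply Cauchy's integral formula on the contour $C_R$ bounding the right half-disk of radius $R$, namely the right semicircular arc together with the imaginary-axis segment traversed downward from $jR$ to $-jR$ (counterclockwise orientation). For $R$ large, $s_0$ lies inside $C_R$, so $g(s_0) = \frac{1}{2\pi j}\oint_{C_R}\frac{g(s)}{s-s_0}\,ds$, whereas its reflection $-\bar s_0$ across the imaginary axis lies in the left half-plane, hence outside $C_R$, giving $0 = \frac{1}{2\pi j}\oint_{C_R}\frac{g(s)}{s+\bar s_0}\,ds$. Subtracting the two and combining the kernels yields $\frac{1}{s-s_0}-\frac{1}{s+\bar s_0}=\frac{2\sigma_0}{(s-s_0)(s+\bar s_0)}$, using $s_0+\bar s_0 = 2\sigma_0$. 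Restricting to $s=j\omega$ and carrying out the elementary algebra, the combined kernel collapses to the half-plane Poisson kernel; after accounting for the downward orientation along the imaginary axis, the imaginary-axis contribution becomes precisely $\frac{1}{\pi}\int g(j\omega)\,\frac{\sigma_0}{\sigma_0^2+(\omega_0-\omega)^2}\,d\omega$.

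The remaining step, which I expect to be the main obstacle, is to show that the contribution of the semicircular arc vanishes as $R\to\infty$; this is where the bounded-magnitude hypothesis is used. The combined kernel decays like $2\sigma_0/R^2$ on the arc while the arc length grows only like $\pi R$, so the arc integral is controlled by $(\sup_{\text{arc}}|g|)/R$, and boundedness of $|f|$ (so that $\log|f|$ does not grow faster than a power of $R$) forces this to tend to zero; care is needed both in the growth estimate near the point at infinity and in tracking the sign and the factor of two so that the kernel appears with coefficient $1/\pi$ rather than $1/(2\pi)$. Taking real parts of the resulting complex identity then gives the claim, since $\textup{Re}\,g = \log|f|$. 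As an alternative route that avoids the two-kernel bookkeeping, I would map the half-plane to the unit disk by $\phi(s) = (s-s_0)/(s+\bar s_0)$, which sends $s_0$ to the origin and the imaginary axis to the unit circle, apply the mean-value property $G(0)=\frac{1}{2\pi}\int_0^{2\pi}G(e^{j\theta})\,d\theta$ to $G = g\circ\phi^{-1}$, and change variables back to $\omega$; the Jacobian $d\theta/d\omega$ equals exactly $2\sigma_0/(\sigma_0^2+(\omega_0-\omega)^2)$, reproducing the same identity.
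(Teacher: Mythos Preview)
The paper does not supply a proof of this lemma; it is quoted with a citation to \cite[Corollary~A.6.3]{seron2012fundamental} and then invoked as a tool in the proof of Theorem~\ref{thm: performance}. Your proposal gives the standard textbook derivation: take a single-valued analytic branch $g=\log f$ (legitimate because, in the paper's application, $f$ is always a minimum-phase factor $\tilde S_O$ or $\tilde P_O$ and hence zero-free in the open right half-plane), apply Cauchy's integral formula on the right half-disk to $g(s)/(s-s_0)$ and to $g(s)/(s+\bar s_0)$, subtract so that the imaginary-axis contribution collapses to the half-plane Poisson kernel, and let the semicircular arc vanish via the $O(R^{-2})$ decay of the combined kernel against the $O(R)$ arc length. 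This is correct and is essentially the argument given in the cited reference; your alternative route via the conformal map $\phi(s)=(s-s_0)/(s+\bar s_0)$ and the disk mean-value property is equally valid and indeed slightly cleaner for the sign and $1/\pi$ bookkeeping you flag. The only point worth tightening is the arc estimate: bounded $|f|$ controls $\log|f|$ from above but not from below, so in general one also needs $f$ bounded away from zero on large arcs (or an integrability condition on $\log|f(j\omega)|$); in the paper's setting the factors are rational with no right-half-plane zeros and nonzero limit at infinity, so this holds automatically.
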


\begin{proof}[Proof of Theorem~\ref{thm: performance}]
    We only demonstrate the validity of the performance limitation for~$||S_O||_{\mathcal{H}_{\infty}}$, as the derivation process for the limitation of~$||S_I||_{\mathcal{H}_{\infty}}$ follows a similar path.
    Based on~\eqref{eq: Poisson}, setting $f(s) = \tilde{S}_O(s)$ and noting that $S_O(\beta_k) = \tilde{S}_O(\beta_k) \mathcal{B}_{S_O}(\beta_k)=1$ for each $\beta_k \in \mathcal{Z}_{P_O}$, $\beta_k = \sigma_{\beta_k} + j \omega_{\beta_k}$, we have
    \begin{align*}
        &\log |\tilde{S}_O(\beta_k)| 
       =  \log |\mathcal{B}^{-1}_{S_O} (\beta_k)| \\
       =~&\frac{1}{\pi} \int_{-\infty}^{\infty} \log |S_O(j \omega)| \frac{\sigma_{\beta_k}}{\sigma_{\beta_k}^2+(\omega_{\beta_k}-\omega)^2} \, \text{d} \omega,
    \end{align*}
    where $|\mathcal{B}_{S_O}(j\omega)|=1$ (therefore $|S_O(j \omega)| = |\tilde{S}_O(j\omega)|$) is utilized in the second equality.
    Since the integration $\int_{-\infty}^{\infty}  \frac{\sigma_0}{\sigma_0^2+(\omega_0-\omega)^2} ~\text{d} \omega = \pi$, it holds that
    \begin{align}\label{eq: SO ineq1}
        ||S_O||_{\mathcal{H}_{\infty}} \geq |\mathcal{B}^{-1}_{S_O} (\beta_k)| = \left\vert \prod^{n_\alpha}_{i=1} \frac{\beta_k+\bar{\alpha}_i}{\beta_k-\alpha_i} \right\vert.
    \end{align}

    Subsequently, setting $f(s) = \tilde{P}_O(s)$ in~\eqref{eq: Poisson} and noting that $P_O(\alpha_h) = \tilde{P}_O(\alpha_h) \mathcal{B}_{P_O}(\alpha_h)=1$ for each $\alpha_h \in \mathcal{Z}_{S_O}$, $\alpha_h = \sigma_{\alpha_h} + j \omega_{\alpha_h}$, we have
    \begin{align*}
        &\log |\tilde{P}_O(\alpha_h)| = 
        \log |\mathcal{B}^{-1}_{P_O} (\alpha_h)| =\\
        &\frac{1}{\pi} \int_{-\infty}^{\infty} \log |P_O(j \omega)| \frac{\sigma_{\alpha_h}}{\sigma_{\alpha_h}^2+(\omega_{\alpha_h}-\omega)^2} \, \text{d} \omega.
    \end{align*}
    Similar to the above analysis, we have  
        $||P_O||_{\mathcal{H}_{\infty}} \geq |\mathcal{B}^{-1}_{P_O} (\alpha_h)| = \left\vert \prod^{n_\beta}_{i=1} \frac{\alpha_h+\bar{\beta}_i}{\alpha_h-\beta_i} \right\vert$.
    Moreover, since $P_O = 1 - S_O$, it holds that $1 + ||S_O||_{\mathcal{H}_{\infty}} \geq  ||P_O||_{\mathcal{H}_{\infty}}$.
    Therefore, we obtain
    \begin{align}\label{eq: SO ineq2}
        ||S_O||_{\mathcal{H}_{\infty}} \geq  |\mathcal{B}^{-1}_{P_O} (\alpha_h)| -1.
    \end{align}
    Since~\eqref{eq: SO ineq1} holds for all~$\beta_k$ and~\eqref{eq: SO ineq2} holds for all~$\alpha_h$, the performance limitation of $S_O$ satisfies \eqref{eq: performance bound}.
    % \begin{align*}
    %     ||S_O||_{\mathcal{H}_{\infty}} \geq \max_{\alpha_h\in\mathcal{Z}_{S_O},\beta_k\in\mathcal{Z}_{P_O}} \left\{ |\mathcal{B}^{-1}_{S_O}(\beta_k)|, |\mathcal{B}^{-1}_{P_O}(\alpha_h)|-1 \right\}.
    % \end{align*}
    This completes the proof.
\end{proof}

%%%%%%%%%%%%%%%%%%%%%%%%%%%%%%%%%%%%%%%%%%%%%%%%%%%%%%%%%%%%%%%%%%%%%%%%%%%%%%%%

% \addtolength{\textheight}{-0cm}   % This command serves to balance the column lengths
                                  % on the last page of the document manually. It shortens
                                  % the textheight of the last page by a suitable amount.
                                  % This command does not take effect until the next page
                                  % so it should come on the page before the last. Make
                                  % sure that you do not shorten the textheight too much.
                            
\balance                                
\end{document}